\newtheorem{defn}{Definition}
\newtheorem{thrm}{Theorem}
\newtheorem{prop}{Proposition}
\begin{document}

\title{Towards Inclusive Fairness Evaluation via Eliciting Disagreement Feedback from Non-Expert Stakeholders\thanks{This work is supported by an NSF grant...}}

\author{Mukund Telukunta \and
Venkata Sriram Siddhardh Nadendla}
\authorrunning{Telukunta and Nadendla}
%
\institute{Missouri University of Science and Technology, Rolla, MO, USA\\
\email{\{mt3qb,nadendla\}@umsystem.edu}}
  
\maketitle

\begin{abstract}
Traditional algorithmic fairness notions rely on label feedback, which can only be elicited from expert critics. However, in most practical applications, several non-expert stakeholders also play a major role in the system and can have distinctive opinions about the decision making philosophy. For example, in kidney placement programs, transplant surgeons are very wary about accepting kidney offers for black patients due to genetic reasons. However, non-expert stakeholders in kidney placement programs (e.g. patients, donors and their family members) may misinterpret such decisions from the perspective of social discrimination. This paper evaluates group fairness notions from the viewpoint of non-expert stakeholders, who can only provide binary \emph{agreement/disagreement feedback} regarding the decision in context. Specifically, two types of group fairness notions have been identified: (i) \emph{definite notions} (e.g. calibration), which can be evaluated exactly using disagreement feedback, and (ii) \emph{indefinite notions} (e.g. equal opportunity) which suffer from uncertainty due to lack of label feedback. In the case of indefinite notions, bounds are presented based on disagreement rates, and an estimate is constructed based on established bounds. The efficacy of all our findings are validated empirically on real human feedback dataset.

\end{abstract}

\section{Introduction}
Social discrimination and biases in trained machine learning (ML) models have been investigated extensively using group fairness notions, where the ML-based classifier is compared against ground truth label elicited from expert critics. However, most practical systems comprises of multiple stakeholders with heterogeneous expertise and backgrounds who are either decision-makers, or participants who get impacted by these decisions. Unfortunately, the state-of-the-art algorithmic fairness notions rely on label feedback which can only be elicited from expert stakeholders. This selective feedback elicitation policy has raised concerns amongst other non-expert stakeholders, since their opinions are neglected in the fairness evaluation. Therefore, the main goal of this paper is to develop an inclusive fairness evaluation approach to assess various group fairness notions \cite{hardt2016equality,chouldechova2017fair} in ML-based predictive systems which rely on non-expert stakeholders' feedback. 

However, fairness evaluation through non-expert feedback elicitation comes with many challenges. Firstly, non-expert stakeholders lack expertise due to which they cannot fathom the technical attributes of a given input. Revealing such attributes can only lead to cognitive overloading, thereby discouraging non-expert critics to participate in the fairness evaluation process. Secondly, the ambiguity in non-expert critic feedback increases with the number of classes within the classification problem. For example, in a score-based classifier (e.g. COMPAS' recidivism score predictors), non-expert critics may not decipher the nuances between two scores with little gap between them. Thirdly, most practical systems comprises of several stakeholders with diverse opinions. This makes it economically infeasible to collect feedback using large surveys. 

Due to the aforementioned reasons, this paper propose a simple feedback elicitation model for $M$-ary (magnitude of output space is greater than 2) classifiers, where non-expert stakeholders are requested to reveal binary feedback regarding their agreement/disagreement with the classifier's outcome label. The proposed feedback model does not necessarily align with any one fairness notion. Given such a generalized feedback elicitation model, traditional group fairness notions are assorted broadly into two categories: (i) \emph{definite notions} (e.g. calibration \cite{chouldechova2017fair}), which can be precisely evaluated form disagreement feedback, and (ii) \emph{indefinite notions} (e.g. equal opportunity \cite{hardt2016equality}, predictive parity \cite{chouldechova2017fair}), which can be estimated from disagreement feedback along with given system information. Both upper and lower bounds to indefinite group fairness notions are computed based on elicited disagreement rates. Using these bounds, indefinite group fairness notions are estimated and validated empirically on simulated disagreements constructed using a real dataset collected from 400 crowd workers on the COMPAS system \cite{dressel2018accuracy}. Results demonstrate that the proposed estimates of indefinite group fairness notions based on disagreement feedback, exhibit low error across a wide range of critics in the crowd.



\section{Case Study: Kidney Placement in United States \label{sec: kidney placement}}
Most patients with end-stage renal disease (ESRD) prefer kidney transplantation over long-term dialysis to improve their survival rates. Unfortunately, a significant fraction of procured kidneys are discarded (e.g. 20\% kidneys are discarded in 2019) in the U.S. due to diverse reasons, in spite of severe shortage of kidneys available for transplantation \cite{hart2021optn}. As a result, certain social groups (e.g. African American population, patients with hard-to-place kidneys) are severely disadvantaged in finding a match with deceased donor organs. Various researchers, including the United Network of Organ Sharing (UNOS), have developed recommender systems \cite{mcculloh2023experiment,threlkeld2021reducing,sekercioglu2021machine} using machine learning (ML) algorithms which provides appropriate insights to the experts (e.g. doctors, surgeons) to help make quick and reliable decisions regarding kidney offers and minimize discards. One notable example is the model developed by \cite{bertsimas2017accept}, which predicts the probability of a patient being offered a deceased donor kidney of some quality within some time-frame, given patient characteristics. However, the presence of biases in the data collected at various stages within kidney matching workflow continues to be a major cause for a significantly high kidney discard rate. For example, Kidney Donor Risk Index (KDRI) scores quantify the quality of a deceased donor kidney based on several factors such as age, race, diabetes, and hypertension. However, it also explicitly introduces racial bias against African American donors at the organ procurement organizations (OPOs) during the measurement of donor's kidney quality. Although such a bias is associated with lower allograft or patient survival \cite{callender2008blacks} amongst Black donors, models trained on such data continues to feed the disparate treatment of Black population, thus leading to distrust in such communities. On the other end, the herding behavior is observed at the transplant centers (TXCs), when healthcare professionals blindly reject kidney offers from OPOs if they have been rejected repeatedly in the past. Such unintended biases incorporated into the training workflow results in a discriminatory model regardless of its accuracy. Moreover, many surgeons could have optimism bias, which causes them to believe the availability and promptness of a better-matched organ before they really do \cite{mcculloh2023experiment}. 

In the past, biases in such systems have been investigated based on feedback elicited from experts (e.g. transplant surgeons), who possess enough knowledge regarding the application at hand. On the contrary, it is also essential to consider the opinions of non-expert critics (e.g. patients, donors) who are potential stakeholders and more importantly, the victims of social discrimination. Unlike experts, it is infeasible to elicit feedback from the non-experts due to lack of domain-based knowledge. For example, given the characteristics of a deceased donor kidney (e.g. KDPI, history of diabetes and cancer), it is impossible for a non-expert critic to decide whether or not to accept the kidney for transplantation. In order to address this challenge, we propose a novel feedback elicitation model based on \emph{disagreements}. When presented with the decision made by the system for a specific deceased donor, the non-expert critic can either \emph{agree} or \emph{disagree} with the decision based on their intrinsic and unknown fairness relation. From the provided disagreements we show that various group fairness notions can be estimated.

\section{Preliminaries and Related Work}

\subsection{Group Fairness \label{Sec: group fairness}}
Over the past decade, several statistical group fairness notions have been proposed to measure the biases in a given system. Such fairness notions seek for parity of some statistical measure (e.g. true positive rate, predictive parity value) across all the sensitive attributes (e.g. race, gender) present in the data. Specifically, group fairness notions measure the difference in a specific statistical measure between protected (e.g. Caucasians) and unprotected (e.g. African-Americans) groups of a sensitive attribute. Different versions of group-conditional metrics led to different statistical definitions of fairness \cite{caton2020fairness,chouldechova2018frontiers,mehrabi2021survey,pessach2020algorithmic}. Consider $\mathcal{X}$ and $\mathcal{Y}$ as input and output spaces respectively where, $|\mathcal{Y}| > 2$. Let $y = g(x) \in \mathcal{Y}$ be the outcome label given by the ML-based system for some input $x \in \mathcal{X}$. On the other hand, let $z = f(x)$ be the label given by an alternate classifier (e.g. expert/non-expert critic) for the input $x$. Furthermore, let $\mathcal{X}_m, \mathcal{X}_{m'} \in \mathcal{X}$ denote the protected and unprotected sensitive groups respectively. Inspired from prior work \cite{denis2021fairness}, we define various group fairness notions in case of $M$-ary classification (i.e. $|\mathcal{Y}| > 2$) as follows. 

\vspace{2ex}
\noindent
\textbf{\emph{Statistical parity \cite{dwork2012fairness}}:} This measure seeks to compute the probability difference of individuals who are predicted to be positive across different sensitive groups. Formally, if $SP_{m, k} = \mathbb{P}(y = k \ | \ x \in \mathcal{X}_m)$ denotes the conditional probability of the sensitive group $\mathcal{X}_m$ to receive a label $k \in \mathcal{Y}$, the statistical parity of the system $g$ can be quantified as
\begin{equation}
\max_k \left( \max_{m, m'} \ SP_{m, k} - SP_{m', k} \right),
\label{Eqn: statistical parity}
\end{equation}
for all $k \in \mathcal{Y}$ and $\mathcal{X}_m, \mathcal{X}_{m'} \in \mathcal{X}$. If the difference $SP_{m, k} - SP_{m', k}$ is greater than 0, then the protected group is benefited. On the other hand, if the difference is less than 0, the unprotected group is benefited. Note that the statistical parity of any system can be directly measured from the system's outcome labels, without any need for an alternative classifier. However, this is not the case with other statistical fairness metrics.


\vspace{2ex}
\noindent
\textbf{\emph{Calibration \cite{chouldechova2017fair}}:} A classifier is said to satisfy calibration if both protected and unprotected groups have almost similar positive predictive values (PPV). The PPV represents the probability of an individual with a positive prediction actually experiencing a positive outcome. Formally, if $C_{m,k} = \mathbb{P}(z = k \ | \ y = k, x \in \mathcal{X}_m)$ denote the positive predictive rate for the group $\mathcal{X}_m$, the calibration of the system $g$ with respect to the classifier $f$ is computed as
\begin{equation}
\max_k \left( \max_{m, m'} \ C_{m,k} - C_{m',k} \right),
\label{Eqn: calibration}
\end{equation}
for all $k \in \mathcal{Y}$ and $\mathcal{X}_m, \mathcal{X}_{m'}  \in \mathcal{X}$.


\vspace{2ex}
\noindent
\textbf{\emph{Accuracy Equality \cite{berk2018riskassess}}:} This statistical measure computes the probability that both the classifiers $g$ and $f$ yielding the same label. Specifically, if $AE_{m,k} = \mathbb{P}(y = z \ | \ x \in \mathcal{X}_m)$ denote the conditional probability that both classifiers output the same label for the sensitive group $\mathcal{X}_m$, the accuracy equality of the system $g$ with respect to the alternative classifier $f$ is quantified as
\begin{equation}
\max_k \left( \max_{m, m'} \ AE_{m, k} - AE_{m', k} \right),
\label{Eqn: accuracy equality}
\end{equation}
for all $k \in \mathcal{Y}$ and $\mathcal{X}_m, \mathcal{X}_{m'}  \in \mathcal{X}$.


\vspace{2ex}
\noindent
\textbf{\emph{Equal Opportunity \cite{MoritzOpportunities}}:} A classifier is said to satisfy equal opportunity when both protected and unprotected groups have similar true positive rates (TPR). Formally, if $EO_{m,k} = \mathbb{P}(y = k \ | \ z = k, x \in \mathcal{X}_m)$ denotes the equal opportunity rate for the group $\mathcal{X}_m$, the equal opportunity of the system $g$ with respect to the label $z = f(x)$ is given as
\begin{equation}
\max_k \left( \max_{m, m'} \ EO_{m, k} - EO_{m', k} \right),
\label{Eqn: equalized odds}
\end{equation}
for all $k \in \mathcal{Y}$ and $\mathcal{X}_m, \mathcal{X}_{m'}  \in \mathcal{X}$.


\vspace{2ex}
\noindent
\textbf{\emph{Predictive Equality \cite{corbett2017algorithmic}}:} A classifier is said to satisfy equal opportunity when both protected and unprotected groups have similar false positive rates (FPR). Formally, if $PE_{m,k} = \mathbb{P}(y = k \ | \ z \neq k, x \in \mathcal{X}_m)$ denotes the predictive equality rate of the group $\mathcal{X}_m$, the predictive equality of the system $g$ with respect to the classifier $f$ is computed as
\begin{equation}
\max_k \left( \max_{m, m'} \ PE_{m, k} - PE_{m', k} \right),
\label{Eqn: predictive equality}
\end{equation}
for all $k \in \mathcal{Y}$ and $\mathcal{X}_m, \mathcal{X}_{m'}  \in \mathcal{X}$.


\vspace{2ex}
\noindent
\textbf{\emph{Overall Misclassification Rate \cite{rouzot2022learning}}:}
If $OMR_{m,k} = \mathbb{P}(y \neq k \ | \ z = k, x \in \mathcal{X}_m)$, the overall misclassification rate of the system $g$ with respect the classifier $f$ is given as
\begin{equation}
\max_k \left( \max_{m, m'} \ OMR_{m, k} - OMR_{m', k} \right),
\label{Eqn: overall misclassification}
\end{equation}
for all $k \in \mathcal{Y}$ and $\mathcal{X}_m, \mathcal{X}_{m'}  \in \mathcal{X}$.


\subsection{Survey on Feedback Elicitation for Group Fairness Evaluation}
In the past, several researchers have attempted to model human perception of fairness, but have always tried to fit their revealed feedback to one of the traditional fairness notions. For instance, in an experiment performed by \cite{srivastava2019mathematical}, critics were asked to choose among two different models to identify which notion of fairness (demographic parity or equalized odds) best captures people's perception in the context of both risk assessment and medical applications. Likewise, another team surveyed 502 workers on Amazon's Mturk platform and observed a preference towards \emph{equal opportunity} in \cite{harrison2020empirical}. Dressel and Farid in \cite{dressel2018accuracy} showed that COMPAS is as accurate and fair as that of untrained human auditors regarding predicting recidivism scores. On the other hand, \cite{yaghini2021human} proposed a novel fairness notion, equality of opportunity (EOP), which requires that the distribution of utility should be the same for individuals with similar desert. Based on eliciting human judgments, they learned the proposed EOP notion in terms of criminal risk assessment context. Results show that EOP performs better than existing notions of algorithmic fairness in terms of equalizing utility distribution across groups. Another interesting work is by \cite{grgic2018human}, who discovered that people's fairness concerns are typically multi-dimensional (relevance, reliability, and volitionality), especially when binary feedback was elicited. This means that modeling human feedback should consider several factors beyond social discrimination. A major drawback of these approaches is that the demographics of the participants involved in the experiments \cite{yaghini2021human,grgic2018human,harrison2020empirical,saxena2019fairness} are not evenly distributed. For instance, the conducted experiments ask how models treated Caucasians and African-Americans, but there were insufficient non-Caucasian participants to assess whether there was a relationship between the participant's own demographics and what group was disadvantaged. Moreover, the participants are presented with multiple questions in the existing literature which cannot be scaled for larger decision-based models \cite{yaghini2021human}. Similar efforts have also been carried out in the case of individual fairness notions \cite{dwork2012fairness}. Since individual fairness is beyond the scope of this work, a survey on his topic is omitted for the sake of brevity. Interested readers may refer to \cite{jung2019eliciting,gillen2018online,saxena2019fairness,joseph2016fairness,liu2017calibrated} for more details. 



\section{Non-Expert Disagreement Model}
As defined earlier in Section \ref{Sec: group fairness}, let $\mathcal{X}$ denote the high-dimensional input space where each $x_i \in \mathcal{X}$ represents the input characteristics (e.g. gender, race, history of cancer/history of crime) for all $i = \{1, \cdots, N\}$ input samples. On the other hand, let $\mathcal{Y}$ denote the output space (e.g. donor kidney quality score, decile score) where $|\mathcal{Y}| > 2$. Consider $g: \mathcal{X} \rightarrow \mathcal{Y}$ a ML-based system and $\hat{y} = g(x) \in \mathcal{Y}$ denote the score given to the input profile $x \in \mathcal{X}$. The goal of this paper is to evaluate the social biases present in the system $g$ from the outcome disagreements elicited from a non-expert stakeholder who evaluates using an intrinsic classifier $f$. Without any loss of generality, let $z = f(x)$ denote the intrinsic label of the non-expert critic, who uses an unknown classifier $f$ on the given input $x$. Henceforth, we regard \emph{outcome label} as the label $y$ given by the recommender system, and \emph{intrinsic label} as the unknown label $z$ of the non-expert critic. The non-expert classifier $f$ is typically deemed unreliable since the critics often lack technical knowledge and make amateur judgements. Therefore, this paper assumes that, given an input profile $x$ and outcome label $y = g(x)$, feedback from non-expert critics is elicited in the form of a binary \emph{disagreement} $s \in \{0, 1\}$, as defined below.
\begin{defn}[Non-Expert Disagreement Model]
Given the outcome label of the system $y = g(x)$, the disagreement feedback at the non-expert critic is given by
\begin{equation}
s(y) \ = \ 
\begin{cases}
1, & \text{if } z \neq y,
\\[0.5ex]
0, & \text{otherwise.} 
\end{cases}
\label{Eqn: epsilon-disagreement}
\end{equation}
\label{Defn: Disagreement model}
\end{defn}

If the non-expert critic agrees with the outcome, we assume that the true intrinsic label is similar to the outcome label given by the system i.e. $z = y$. Whereas, if the non-expert critic disagrees with the outcome, the intrinsic label can be any other outcome label. For example, assume that the recidivism tool COMPAS predicts a decile score (usually on a scale of 1-10) of 8 for male, African-American defendant. Assuming that the non-expert disagrees with this outcome given by the COMPAS, his/her true intrinsic outcome may lie anywhere in the range $[1, 8) \cup (8, 10]$. Note that, this uncertainty in the non-expert true intrinsic labels can increase with number of labels in the outcome space.


Furthermore, assume that the input population space $\mathcal{X}$ is partitioned into $M$ groups, namely $X_0, \cdots X_{M-1}$, where $X_0$ represents the non-sensitive group, while all other groups are sensitive in nature. For example, if there are two types of attributes in input profiles, namely gender (male vs. non-male) and race (Caucasian vs. others), $\mathcal{X}$ can be partitioned into $\mathcal{X}_0 \triangleq \mathcal{X}_{M,C}$ (a group of Caucasian males), $\mathcal{X}_1 \triangleq \mathcal{X}_{NM,C}$ (a group of Caucasian non-males), $\mathcal{X}_2 \triangleq \mathcal{X}_{M,O}$ (a group of males from other races) and $\mathcal{X}_3 \triangleq \mathcal{X}_{NM,O}$ (a group of non-males from other races). In such a partition, $\mathcal{X}_0 \triangleq \mathcal{X}_{M,C}$ represents the non-sensitive group, while all other groups are sensitive. Then, the $\epsilon$-disagreement rate with respect to the group $\mathcal{X}_m$ is defined as 
\begin{equation}
\begin{array}{lcl}
DR_m & = & \mathbb{P}( s = 1 \ | \ x \in \mathcal{X}_m)
\\[2ex]
& = & \displaystyle \mathbb{P}(z \neq y \ | \ x \in \mathcal{X}_m)
\end{array}
\end{equation}
where $s$ is the $\epsilon$-disagreement from Equation \eqref{Eqn: epsilon-disagreement}. Similarly, let the conditional probability of \textit{disagreements} for a given outcome label $k \in \mathcal{Y}$ be denoted as
\begin{equation}
\begin{array}{lcl}
DR_{m,k} & = & \mathbb{P}( s = 1 \ | \ y = k, x \in \mathcal{X}_m)
\\[2ex]
& = & \displaystyle \mathbb{P}(z \neq k \ | \ y = k, x \in \mathcal{X}_m)
\end{array}
\end{equation}


\section{Definite Notions}
The set of group fairness notions which can be precisely computed from disagreement rates (and/or statistical parity rates of the system) are identified as definite notions. We determine the notion of accuracy equality \cite{berk2018riskassess} and calibration \cite{chouldechova2017fair} as definite notions, which are quantified as follows.

\subsection{Accuracy Equality} 

\begin{prop}
Given the disagreement rates of the non-expert $DR_{m ,k}, DR_{m' ,k}$ for sensitive groups $\mathcal{X}_m, \mathcal{X}_{m'} \in \mathcal{X}$, the accuracy equality of the system $g$ can be precisely computed as
\begin{equation}
\begin{array}{l}
\max_k \left( \max_{m, m'} \ AE_{m, k} - AE_{m', k} \right)  
\\[2ex]
\qquad \triangleq \max_k \left( \max_{m, m'} \ \displaystyle \sum_{k \in \mathcal{Y}} DR_{m, k}\cdot SP_{m,k} - \displaystyle \sum_{k \in \mathcal{Y}} DR_{m', k}\cdot SP_{m',k} \right),
\end{array}
\end{equation}
for all $k \in \mathcal{Y}$ and $\mathcal{X}_m, \mathcal{X}_{m'} \in \mathcal{X}$.
\end{prop}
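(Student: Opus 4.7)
The plan is to unfold the right-hand side using the definitions from Sections 3--4 and show that each inner sum equals the group-$m$ disagreement probability $\mathbb{P}(z \neq y \mid x \in \mathcal{X}_m)$, which is just $1 - AE_{m,k}$. Because both sides of the claimed identity then differ by a sign-flipped affine transformation, the outer maximization absorbs the sign, and the equality of the fairness metrics follows.

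First I would apply the chain rule of conditional probability to rewrite each summand. Since $DR_{m,k} = \mathbb{P}(z \neq k \mid y = k, x \in \mathcal{X}_m)$ and $SP_{m,k} = \mathbb{P}(y = k \mid x \in \mathcal{X}_m)$, their product is the joint probability $\mathbb{P}(z \neq k,\ y = k \mid x \in \mathcal{X}_m)$. On the event $\{y = k\}$, the condition $z \neq k$ is the same as $z \neq y$, so this product equals $\mathbb{P}(z \neq y,\ y = k \mid x \in \mathcal{X}_m)$. Summing over $k \in \mathcal{Y}$ and using that the events $\{y = k\}$ partition the sample space, I get
\begin{equation*}
\sum_{k \in \mathcal{Y}} DR_{m,k} \cdot SP_{m,k} \ = \ \mathbb{P}(z \neq y \mid x \in \mathcal{X}_m) \ = \ 1 - \mathbb{P}(z = y \mid x \in \mathcal{X}_m),
\end{equation*}
which by the definition of $AE_{m,k}$ equals $1 - AE_{m,k}$.

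Next I would subtract the analogous identity for $\mathcal{X}_{m'}$, whereupon the constants cancel and I obtain
\begin{equation*}
\sum_{k \in \mathcal{Y}} DR_{m,k} \cdot SP_{m,k} - \sum_{k \in \mathcal{Y}} DR_{m',k} \cdot SP_{m',k} \ = \ AE_{m',k} - AE_{m,k}.
\end{equation*}
This differs from the expression in the proposition only by swapping $m$ and $m'$. Finally I would take $\max_{m, m'}$; since the max is over all ordered pairs of sensitive groups, it is invariant under such a swap, so the identity in the proposition follows. The outer $\max_k$ is then vacuous because the post-summation quantity no longer depends on $k$.

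There is no real obstacle beyond bookkeeping: the only subtle point is the sign convention, namely that $\sum_k DR_{m,k} SP_{m,k}$ measures group-$m$ \emph{disagreement} rather than accuracy, so a direct substitution into the difference reverses the roles of $m$ and $m'$. I would flag this explicitly and note that it is absorbed by the symmetric maximization, which is why the proposition is stated as an equality of fairness metrics rather than of the per-pair quantities themselves.
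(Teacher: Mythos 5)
Your proof is correct and follows essentially the same route as the paper: both decompose the disagreement probability over the partition $\{y=k\}$ via the chain rule to obtain $AE_{m,k} = 1 - \sum_{k} DR_{m,k}\cdot SP_{m,k}$. You are in fact slightly more careful than the paper, which substitutes this identity into the difference without commenting on the resulting swap of $m$ and $m'$ that you correctly observe is absorbed by the maximization over ordered pairs.
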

\begin{proof}
Using the accuracy equality rate $AE_{m, k}$ for the label $k \in \mathcal{Y}$ and the group $\mathcal{X}_m \in \mathcal{X}$ from the Equation \eqref{Eqn: accuracy equality},
\begin{equation}
AE_{m,k} 
= \mathbb{P}(y = z \ | \ x \in \mathcal{X}_m)
= 1 - \mathbb{P}(y \neq z \ | \ x \in \mathcal{X}_m)
\end{equation}
Considering all the possible labels $k \in \mathcal{Y}$,
\begin{equation}
\begin{array}{lcl}
AE_{m,k} &=& 1 - \displaystyle \sum_{k \in \mathcal{Y}} \mathbb{P}(y = k, z \neq k \ | \ x \in \mathcal{X}_m)
\\[3ex]
&=& 1 - \displaystyle \sum_{k \in \mathcal{Y}} \mathbb{P}(z \neq k \ | \ y = k, x \in \mathcal{X}_m)\cdot \mathbb{P}(y = k \ | \ x \in \mathcal{X}_m)
\end{array}
\end{equation}
Substituting the disagreement rate $DR_{m,k}$ and statistical parity rate $SP_{m,k}$ for the label $k$ and group $\mathcal{X}_m$, we obtain
\begin{equation}
AE_{m,k} = 1 - \displaystyle \sum_{k \in \mathcal{Y}} DR_{m, k}\cdot SP_{m,k}
\end{equation}
\hfill \qed
\end{proof}
 
\subsection{Calibration} 

\begin{prop}
Given the disagreement rates of the non-expert $DR_{m ,k}, DR_{m' ,k}$ for sensitive groups $\mathcal{X}_m, \mathcal{X}_{m'} \in \mathcal{X}$, the calibration of the system $g$ can be precisely computed as
\begin{equation}
\max_k \left( \max_{m, m'} \ C_{m, k} - C_{m', k} \right) \triangleq \max_k \left( \max_{m, m'} \ DR_{m, k} - DR_{m', k} \right)
\end{equation}
for all $k \in \mathcal{Y}$ and $\mathcal{X}_m, \mathcal{X}_{m'} \in \mathcal{X}$.
\end{prop}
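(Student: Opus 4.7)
The plan is to reduce the calibration rate directly to the disagreement rate through a single complementary-event argument, then handle the apparent sign mismatch between $C_{m,k}-C_{m',k}$ and $DR_{m,k}-DR_{m',k}$ by invoking the symmetry of the outer max over the pair $(m,m')$.

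First I would unpack the definition of calibration from Equation \eqref{Eqn: calibration}, writing
\begin{equation}
C_{m,k} \;=\; \mathbb{P}(z = k \mid y = k,\, x \in \mathcal{X}_m).
\end{equation}
Since, conditional on $\{y=k, x\in\mathcal{X}_m\}$, the events $\{z=k\}$ and $\{z\neq k\}$ partition the sample space, I immediately get $C_{m,k} = 1 - \mathbb{P}(z\neq k \mid y=k, x\in\mathcal{X}_m)$. By the definition of the conditional disagreement rate $DR_{m,k}$ introduced just above this section, the second term on the right is exactly $DR_{m,k}$, so $C_{m,k} = 1 - DR_{m,k}$.

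Subtracting the analogous identity for group $\mathcal{X}_{m'}$ gives $C_{m,k} - C_{m',k} = DR_{m',k} - DR_{m,k}$, which differs in sign from the target expression $DR_{m,k}-DR_{m',k}$. The resolution is that both inner maxima range over \emph{ordered} pairs $(m,m')$ with $m\neq m'$ (or over all ordered pairs, with the diagonal contributing $0$), so swapping the dummy indices yields
\begin{equation}
\max_{m,m'}\bigl(C_{m,k} - C_{m',k}\bigr) \;=\; \max_{m,m'}\bigl(DR_{m',k} - DR_{m,k}\bigr) \;=\; \max_{m,m'}\bigl(DR_{m,k} - DR_{m',k}\bigr).
\end{equation}
Taking the outer max over $k \in \mathcal{Y}$ on both sides delivers the claim.

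The main obstacle, if there is one, is being precise about this symmetry step rather than any technical calculation, since the calibration identity itself is immediate from the binary nature of the disagreement signal in Definition \ref{Defn: Disagreement model}. A minor point to flag in the write-up is that the equivalence holds for the symmetric signed gap used throughout Section \ref{Sec: group fairness}; if one instead measured $|C_{m,k}-C_{m',k}|$, the reduction would be even more direct and would not require the index swap.
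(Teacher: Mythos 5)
Your proof is correct and rests on the same key identity as the paper's, namely $C_{m,k} = 1 - DR_{m,k}$ obtained from complementarity of $\{z=k\}$ and $\{z\neq k\}$ under the conditioning event. You are in fact slightly more careful than the paper, which silently switches to the absolute-value form $|C_{i,k}-C_{j,k}| = |DR_{i,k}-DR_{j,k}|$ to dodge the sign flip, whereas you resolve it explicitly by swapping the dummy indices in the ordered max over $(m,m')$.
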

\begin{proof}
Using the definition of calibration from the Equation \eqref{Eqn: calibration},
\begin{equation}
\begin{array}{lcl}
|C_{i,k} - C_{j,k}| &=& |1 - DR_{i, k} - 1 + DR_{j, k}|
\\[2ex]
&=& |DR_{i, k} - DR_{j, k}|.
\end{array}
\end{equation}

\hfill \qed
\end{proof}



\section{Indefinite Notions: Bounds and Estimates}
The set of group fairness notions which \emph{cannot} be computed precisely but can be estimated from disagreement rates (along with statistical parity rates) are identified as indefinite notions. We classify the notions of equal opportunity \cite{hardt2016equality}, predictive equality \cite{corbett2017algorithmic}, and overall misclassification rate \cite{rouzot2022learning} as indefinite notions. Given any indefinite group fairness rate $R_{m,k}$ for a given label $k$ under a sensitive group $\mathcal{X}_m$, let any generalized group fairness (GF) notion be defined as
\begin{equation}
GF = \displaystyle \max_k \Big( \max_{m, m'} R_{m,k} - R_{m',k} \Big).
\end{equation}

Then, we have the following estimate for $GF$ based on lower $L_{m,k}$ and upper $U_{m,k}$ bounds computed using disagreement rates.
\begin{thrm}
If an indefinite group fairness rate $R_{m,k}$ is bounded by $L_{m,k}$ from below and $U_{m,k}$ from above, such that both $L_{m,k}$ and $U_{m,k}$ are computed using disagreement feedback. In other words, if
$$L_{m,k} \leq R_{m,k} \leq U_{m,k},$$
then the group fairness notion is bounded by
\begin{equation}
\displaystyle \max_k \Big( \max_{m, m'} L_{m,k} - U_{m',k} \Big) \ \leq \ GF \ \leq \ \displaystyle \max_k \Big( \max_{m, m'} U_{m,k} - L_{m',k} \Big).
\label{Eqn: group fairness bounds}
\end{equation}

\vspace{2ex}
\noindent
Furthermore, an estimate for $GF$ based on these bounds is given by
\begin{equation}
\displaystyle \hat{GF} = \frac{1}{2} \left[ \max_k \Big( \max_{m, m'} L_{m,k} - U_{m',k} \Big) + \max_k \Big( \max_{m, m'} U_{m,k} - L_{m',k} \Big) \right].
\label{Eqn: group fairness estimate}
\end{equation}
\end{thrm}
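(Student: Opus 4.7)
The plan is to establish the sandwich bound pointwise (before any maxima are taken) and then invoke monotonicity of the nested $\max$ operations; the estimate is a definition, not an object requiring separate justification. Since the theorem has no probabilistic content beyond the hypothesis $L_{m,k} \leq R_{m,k} \leq U_{m,k}$, the entire argument is elementary manipulation of ordered reals.

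First, I would fix an arbitrary label $k \in \mathcal{Y}$ and an arbitrary ordered pair of groups $\mathcal{X}_m, \mathcal{X}_{m'}$. Combining $R_{m,k} \leq U_{m,k}$ with $-R_{m',k} \leq -L_{m',k}$ (obtained by negating $R_{m',k} \geq L_{m',k}$) and adding yields
\begin{equation}
L_{m,k} - U_{m',k} \ \leq \ R_{m,k} - R_{m',k} \ \leq \ U_{m,k} - L_{m',k},
\end{equation}
where the lower inequality is obtained symmetrically from $R_{m,k} \geq L_{m,k}$ and $R_{m',k} \leq U_{m',k}$. This is the single nontrivial step and it is purely arithmetic.

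Second, I would promote this pointwise sandwich to the group-fairness quantity by applying $\max_{m,m'}$ and then $\max_k$ to each side. The monotonicity of $\max$ preserves inequalities, and crucially the right-hand bound $U_{m,k} - L_{m',k}$ is indexed by the \emph{same} pair $(m,m')$ as $R_{m,k} - R_{m',k}$, so that taking $\max_{m,m'}$ of the inequality is legitimate without any reindexing trick. Applying $\max_k$ once more delivers Equation (\ref{Eqn: group fairness bounds}).

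Third, the estimate $\hat{GF}$ in Equation (\ref{Eqn: group fairness estimate}) is simply defined as the arithmetic midpoint of the lower and upper bounds just derived, so it automatically lies inside the interval and requires no separate proof. The only mild obstacle is bookkeeping: one must keep track of which index carries $L$ and which carries $U$ on each side of the inequality, and ensure the outer $\max_k$ is applied to a single $k$ that appears in both terms of the difference (as is the case for all the indefinite notions listed in Section 5). Once this is observed, the result follows mechanically.
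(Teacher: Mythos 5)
Your proposal is correct: the pointwise sandwich $L_{m,k} - U_{m',k} \leq R_{m,k} - R_{m',k} \leq U_{m,k} - L_{m',k}$ followed by monotonicity of the nested maxima is exactly the elementary argument needed, and you rightly observe that $\hat{GF}$ is a definition (the midpoint of the two bounds) with nothing further to prove. The paper in fact states this theorem without an explicit proof, so your write-up supplies precisely the routine justification it implicitly relies on, with no gap.
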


Unfortunately, the upper bound computation from disagreement rates is found to be non-trivial. Therefore, we consider $U_{m,k} = 1$ for every indefinite notion.

\subsection{Equal Opportunity} 
\begin{prop}
Given the rate of disagreements $DR_{m, k}$ and statistical parity rate $SP_{m,k}$ of the system, the lower bound on equal opportunity of the recommender system $g$ is given by
\begin{equation}
EO_{m, k} \geq \displaystyle \frac{(1 - DR_{m,k}) \cdot SP_{m,k}}{(1 - DR_{m,k}) \cdot SP_{m,k} + \displaystyle \sum_{l \neq k} SP_{m,l}}
\end{equation}
for every sensitive group $\mathcal{X}_m$ and output score $k \in \mathcal{Y}$.
\end{prop}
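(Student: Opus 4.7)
The plan is to derive the bound directly from the definition of $EO_{m,k}$ by applying Bayes' rule, and then bounding the one quantity that cannot be read off from the disagreement feedback.

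First, I would rewrite $EO_{m,k} = \mathbb{P}(y=k \mid z=k, x \in \mathcal{X}_m)$ via Bayes' rule as
\begin{equation*}
EO_{m,k} \;=\; \frac{\mathbb{P}(z=k \mid y=k, x \in \mathcal{X}_m)\,\mathbb{P}(y=k \mid x \in \mathcal{X}_m)}{\mathbb{P}(z=k \mid x \in \mathcal{X}_m)}.
\end{equation*}
The numerator is immediately recognizable as $(1 - DR_{m,k}) \cdot SP_{m,k}$ by the disagreement model of Definition~\ref{Defn: Disagreement model} and the definition of $SP_{m,k}$.

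Next, I would expand the denominator using the law of total probability over the outcome label $y$:
\begin{equation*}
\mathbb{P}(z=k \mid x \in \mathcal{X}_m) \;=\; (1 - DR_{m,k}) \cdot SP_{m,k} \;+\; \sum_{l \neq k} \mathbb{P}(z=k \mid y=l, x \in \mathcal{X}_m)\cdot SP_{m,l}.
\end{equation*}
Here the $l = k$ term matches the numerator, while for $l \neq k$ the factor $\mathbb{P}(z=k \mid y=l, x \in \mathcal{X}_m)$ is exactly the piece that the binary disagreement feedback cannot resolve: disagreement only reports $z \neq y$, not which of the remaining labels $z$ actually takes. The cleanest available bound is thus $\mathbb{P}(z=k \mid y=l, x \in \mathcal{X}_m) \leq 1$, which upper bounds the denominator by $(1-DR_{m,k})\cdot SP_{m,k} + \sum_{l\neq k} SP_{m,l}$ and therefore lower bounds $EO_{m,k}$ by the stated expression.

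The main conceptual obstacle (rather than any computational one) is precisely this step: since non-expert feedback is only binary, the conditional probabilities $\mathbb{P}(z=k \mid y=l, x \in \mathcal{X}_m)$ for $l \neq k$ are genuinely unidentifiable from the disagreement data, which is why equal opportunity is an \emph{indefinite} notion. Using the trivial upper bound of $1$ is what forces the framework of Theorem~1 (with $U_{m,k}=1$) and produces the lower bound stated in the proposition; any tighter bound would require additional information beyond the disagreement rates and statistical parity rates.
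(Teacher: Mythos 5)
Your proposal is correct and follows essentially the same route as the paper: Bayes' rule, expansion of the denominator $\mathbb{P}(z=k \mid x \in \mathcal{X}_m)$ over the outcome label via total probability, identification of the $l=k$ term with $(1-DR_{m,k})\cdot SP_{m,k}$, and bounding each cross term by $SP_{m,l}$ (the paper phrases this as $\mathbb{P}(z=k, y=l \mid x \in \mathcal{X}_m) \leq SP_{m,l}$, which is exactly your $\mathbb{P}(z=k \mid y=l, x \in \mathcal{X}_m) \leq 1$). Your added remark about why those cross terms are unidentifiable from binary disagreement feedback is a correct and useful gloss on why the notion is classified as indefinite.
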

\begin{proof}
Taking equal opportunity rate of the sensitive group $\mathcal{X}_m$ and expanding it using Bayes theorem.
\begin{equation}
\begin{array}{lcl}
EO_{m, k} & = & \mathbb{P}[y = k \ | \ z = k, x \in \mathcal{X}_m]
\\[3ex]
& = & \displaystyle \frac{\mathbb{P}(y = k, z = k \ | \ x \in \mathcal{X}_m)}{\displaystyle  \mathbb{P}(z = k \ | \ x \in \mathcal{X}_m)}.
\end{array}
\label{Eqn: EO_{m,k}}
\end{equation}
Considering all possible labels $l \in \mathcal{Y}$ in the denominator of Equation \eqref{Eqn: EO_{m,k}}, we obtain the following lower bound:
\begin{equation}
EO_{m, k} = \displaystyle \frac{\mathbb{P}(z = k \ | \ y = k, x \in \mathcal{X}_m) \cdot \mathbb{P}(y = k \ | \ x \in \mathcal{X}_m)}{\displaystyle \sum_{l \in \mathcal{Y}} \mathbb{P}(z = k \ | \ y = l, x \in \mathcal{X}_m) \cdot \mathbb{P}(y = l \ | \ x \in \mathcal{X}_m)}
\end{equation}
Substituting the disagreement rate $DR_{m,k}$ and statistical parity rate $SP_{m,k}$ for the label $k$ and group $\mathcal{X}_m$, we obtain
\begin{equation}
\begin{array}{lcl}
EO_{m, k} & = & \displaystyle \frac{(1 - DR_{m,k}) \cdot SP_{m,k}}{\displaystyle \sum_{l \in \mathcal{Y}} \mathbb{P}(z = k \ | \ y = l, x \in \mathcal{X}_m) \cdot SP_{m,l}}
\\[6ex]
& = & \displaystyle \frac{(1 - DR_{m,k}) \cdot SP_{m,k}}{(1 - DR_{m,k}) \cdot SP_{m,k} + \displaystyle \sum_{l \neq k} \mathbb{P}(z = k \ | \ y = l, x \in \mathcal{X}_m) \cdot SP_{m,l}}
\\[6ex]
& = & \displaystyle \frac{(1 - DR_{m,k}) \cdot SP_{m,k}}{(1 - DR_{m,k}) \cdot SP_{m,k} + \displaystyle \sum_{l \neq k} \mathbb{P}(z = k , y = l \ | \ x \in \mathcal{X}_m)}
\end{array}
\end{equation}
We know that, $\mathbb{P}(z = k , y = l \ | \ x \in \mathcal{X}_m) = \mathbb{P}(z = k \ | \ y = l, x \in \mathcal{X}_m) \cdot SP_{m,l} \leq SP_{m,l}$,
\begin{equation}
EO_{m, k} \geq \displaystyle \frac{(1 - DR_{m,k}) \cdot SP_{m,k}}{(1 - DR_{m,k}) \cdot SP_{m,k} + \displaystyle \sum_{l \neq k} SP_{m,l}}
\end{equation}



\hfill \qed
\end{proof}

Therefore, from Equation \eqref{Eqn: group fairness bounds}, the notion of equal opportunity is bounded by 
\begin{equation}
\displaystyle \max_k \left( \max_{m, m'} \displaystyle \frac{\phi_{m,k}}{\phi_{m,k} + \displaystyle \sum_{l \neq k} SP_{m,l}} - 1 \right) \ \leq \ EO \ \leq \ \displaystyle \max_k \left( \max_{m, m'} 1 -\displaystyle \frac{\phi_{m',k}}{\phi_{m',k} + \displaystyle \sum_{l \neq k} SP_{m',l}} \right)
\end{equation}
and, from Equation \eqref{Eqn: group fairness estimate}, the estimate of equal opportunity from the computed bounds is as follows.
\begin{equation}
\displaystyle \hat{EO} = \frac{1}{2} \left[ \max_k \left( \max_{m, m'}\frac{\phi_{m,k}}{\phi_{m,k} + \displaystyle \sum_{l \neq k} SP_{m,l}} - 1 \right) + \max_k \left( \max_{m, m'} 1 -\displaystyle \frac{\phi_{m',k}}{\phi_{m',k} + \displaystyle \sum_{l \neq k} SP_{m',l}} \right) \right]
\end{equation}
where $\phi_{m,k} = (1 - DR_{m,k}) \cdot SP_{m,k}$.


\subsection{Predictive Equality}
\begin{prop}
Given the rate of disagreements $DR_{m, k}$ and statistical parity rate $SP_{m,k}$ of the system, the lower bound on predictive equality of the recommender system $g$ can be estimated as
\begin{equation}
PE_{m, k} \geq \displaystyle \frac{DR_{m,k}\cdot SP_{m,k}}{DR_{m,k}\cdot SP_{m,k} + 
\displaystyle \sum_{l \neq k} SP_{m,l}}
\end{equation}
for every sensitive group $\mathcal{X}_m$ and output score $k \in \mathcal{Y}$.
\end{prop}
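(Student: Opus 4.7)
The plan is to mirror the lower-bound derivation used for equal opportunity, since the structure of $PE_{m,k}=\mathbb{P}(y=k \mid z\neq k,\, x\in\mathcal{X}_m)$ differs from $EO_{m,k}$ only by swapping the roles of the outcome and intrinsic labels in the conditioning event, and by conditioning on $z\neq k$ rather than $z=k$. First I would apply Bayes' rule to write
\begin{equation*}
PE_{m,k} \;=\; \frac{\mathbb{P}(y=k,\, z\neq k \mid x\in\mathcal{X}_m)}{\mathbb{P}(z\neq k \mid x\in\mathcal{X}_m)}.
\end{equation*}

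Next I would simplify the numerator directly. Factoring as $\mathbb{P}(z\neq k \mid y=k,\, x\in\mathcal{X}_m)\cdot \mathbb{P}(y=k\mid x\in\mathcal{X}_m)$ lets me substitute the disagreement rate $DR_{m,k}$ and the statistical parity rate $SP_{m,k}$ to get $DR_{m,k}\cdot SP_{m,k}$. For the denominator, I would use the law of total probability over the outcome label $y$, splitting the sum into the $l=k$ term (which matches the numerator exactly and equals $DR_{m,k}\cdot SP_{m,k}$) and the $l\neq k$ terms, yielding
\begin{equation*}
\mathbb{P}(z\neq k \mid x\in\mathcal{X}_m) \;=\; DR_{m,k}\cdot SP_{m,k} \;+\; \sum_{l\neq k} \mathbb{P}(z\neq k \mid y=l,\,x\in\mathcal{X}_m)\cdot SP_{m,l}.
\end{equation*}

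The key inequality then comes from bounding $\mathbb{P}(z\neq k \mid y=l,\, x\in\mathcal{X}_m)\leq 1$ for each $l\neq k$. This upper-bounds the denominator by $DR_{m,k}\cdot SP_{m,k}+\sum_{l\neq k} SP_{m,l}$, which (since the numerator is fixed) gives the claimed lower bound on $PE_{m,k}$.

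The genuine obstacle is the same one encountered for equal opportunity: the cross terms $\mathbb{P}(z\neq k \mid y=l,\, x\in\mathcal{X}_m)$ for $l\neq k$ are not recoverable from the elicited quantities, because disagreement feedback only tells us $\mathbb{P}(z\neq l \mid y=l,\, x\in\mathcal{X}_m)=DR_{m,l}$, not the probability that $z$ equals any specific alternative label $k$. Any attempt to sharpen the bound beyond the trivial $\leq 1$ would require additional side information about the structure of $f$, so the crude bound is the best one can do in this feedback model. Consequently no separate nontrivial upper bound is derived, and the theorem-level estimate will use $U_{m,k}=1$ as flagged in the preceding discussion.
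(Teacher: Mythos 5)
Your proposal is correct and follows essentially the same route as the paper: Bayes' rule, substitution of $DR_{m,k}\cdot SP_{m,k}$ in the numerator, total probability over $y$ in the denominator, and bounding each cross term by $SP_{m,l}$ (your bound $\mathbb{P}(z\neq k \mid y=l, x\in\mathcal{X}_m)\leq 1$ is the same step, just written with the conditional factored out). Your closing remark about the cross terms being unrecoverable from disagreement feedback, forcing $U_{m,k}=1$, also matches the paper's discussion.
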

\begin{proof}
Taking the predictive parity rate of the sensitive group $a$
\begin{equation}
\begin{array}{lcl}
PE_{m, k} & = & \mathbb{P}(y = k \ | \ z \neq k, x \in \mathcal{X}_m)
\\[3ex]
&=& \displaystyle \frac{\mathbb{P}(z \neq k \ | \ y =k, x \in \mathcal{X}_m)\cdot \mathbb{P}(y =k \ | \ x \in \mathcal{X}_m)}{\mathbb{P}(z \neq k \ | \ x \in \mathcal{X}_m)}
\end{array}
\end{equation}
Substituting the disagreement rate $DR_{m,k}$ and statistical parity rate $SP_{m,k}$ for the label $k$ and group $\mathcal{X}_m$, we obtain
\begin{equation}
\begin{array}{lcl}
PE_{m,k} &=& \displaystyle \frac{DR_{m,k}\cdot SP_{m,k}}{\mathbb{P}(z \neq k \ | \ x \in \mathcal{X}_m)}
\\[4ex]
&=& \displaystyle \frac{DR_{m,k}\cdot SP_{m,k}}{\displaystyle \sum_{l \in \mathcal{Y}} \mathbb{P}(z \neq k, y = l \ | \ x \in \mathcal{X}_m)}
\\[6ex]
&=& \displaystyle \frac{DR_{m,k}\cdot SP_{m,k}}{DR_{m,k}\cdot SP_{m,k} + 
\displaystyle \sum_{l \neq k} \mathbb{P}(z \neq k, y = l \ | \ x \in \mathcal{X}_m)}
\\[6ex]
&\geq& \displaystyle \frac{DR_{m,k}\cdot SP_{m,k}}{DR_{m,k}\cdot SP_{m,k} + 
\displaystyle \sum_{l \neq k} SP_{m,l}}
\end{array}
\end{equation}
\hfill \qed
\end{proof}

From Equation \eqref{Eqn: group fairness bounds}, the notion of predictive equality is bounded by 
\begin{equation}
\displaystyle \max_k \left( \max_{m, m'} \displaystyle \frac{\mu_{m,k}}{\mu_{m,k} + \sum_{l \neq k} SP_{m,l}} - 1 \right) \ \leq \ PE \ \leq \ \displaystyle \max_k \left( \max_{m, m'} 1 - \displaystyle \frac{\mu_{m',k}}{\mu_{m',k} + \sum_{l \neq k} SP_{m',l}} \right)
\end{equation}
and, from Equation \eqref{Eqn: group fairness estimate}, the estimate of predictive equality from the computed bounds is as follows.
\begin{equation}
\displaystyle \hat{PE} = \frac{1}{2} \left[ \displaystyle \max_k \left( \max_{m, m'} \displaystyle \frac{\mu_{m,k}}{\mu_{m,k} + \sum_{l \neq k} SP_{m,l}} - 1 \right) + \displaystyle \max_k \left( \max_{m, m'} 1 - \displaystyle \frac{\mu_{m',k}}{\mu_{m',k} + \sum_{l \neq k} SP_{m',l}} \right) \right]
\end{equation}
where $\mu_{m,k} = DR_{m,k} \cdot SP_{m,k}$.

\subsection{Overall Misclassification Rate}

\begin{prop}
Given the rate of disagreements $DR_{m, k}$ and statistical parity rate $SP_{m,k}$ of the system, the lower bound on overall misclassification rate of the recommender system $g$ is given by
\begin{equation}
OMR_{m, k} \geq \displaystyle \frac{\displaystyle \sum_{l \neq k} SP_{m,l}}{(1 - DR_{m,k}) \cdot SP_{m,k} + \displaystyle \sum_{l \neq k} SP_{m,l}}
\end{equation}
for every sensitive group $\mathcal{X}_m$ and output score $k \in \mathcal{Y}$.
\end{prop}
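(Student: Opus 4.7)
The plan is to mirror the derivations given earlier for equal opportunity and predictive equality. I would apply Bayes' rule to $OMR_{m,k}$, expand the denominator by total probability over the system output $y$, replace the factors that can be identified with $DR_{m,k}$ and $SP_{m,k}$, and then bound the remaining hidden joint probabilities via statistical parity rates.

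Concretely, first rewrite $OMR_{m,k} = \mathbb{P}(y \neq k \mid z = k, x \in \mathcal{X}_m)$ as the ratio $\mathbb{P}(z = k, y \neq k \mid x \in \mathcal{X}_m) \big/ \mathbb{P}(z = k \mid x \in \mathcal{X}_m)$. Expanding the denominator by conditioning on the value of $y$ and using $\mathbb{P}(z = k \mid y = k, x \in \mathcal{X}_m) = 1 - DR_{m,k}$ together with $\mathbb{P}(y = k \mid x \in \mathcal{X}_m) = SP_{m,k}$ gives $(1 - DR_{m,k}) \cdot SP_{m,k} + \sum_{l \neq k} \mathbb{P}(z = k, y = l \mid x \in \mathcal{X}_m)$, while the numerator is precisely the same hidden sum $\sum_{l \neq k} \mathbb{P}(z = k, y = l \mid x \in \mathcal{X}_m)$. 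I would then invoke the same inequality used in the EO and PE proofs, namely $\mathbb{P}(z = k, y = l \mid x \in \mathcal{X}_m) \leq SP_{m,l}$, and substitute $\sum_{l \neq k} SP_{m,l}$ for the hidden sum in both places to arrive at the fraction stated in the proposition.

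The main obstacle is a monotonicity subtlety: with $b = (1 - DR_{m,k}) \cdot SP_{m,k}$ held fixed, the map $\alpha \mapsto \alpha/(\alpha + b)$ is strictly increasing in $\alpha$, so substituting an upper bound for $\alpha$ in both numerator and denominator produces an upper, not a lower, bound on $OMR_{m,k}$. This is the opposite of what happened for EO and PE, where the unknown sum appeared only in the denominator and the same substitution therefore did yield a genuine lower bound. Before finalizing the proof I would need to either reverse the direction of the inequality in the statement, or find a different nontrivial lower bound on $\sum_{l \neq k} \mathbb{P}(z = k, y = l \mid x \in \mathcal{X}_m)$ expressible in terms of disagreement feedback, since no such bound seems obtainable from $DR$ and $SP$ alone. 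Reconciling this sign is the one genuinely nonroutine step; the algebraic expansion and the $SP_{m,l}$ substitution otherwise go through exactly as in the two preceding propositions.
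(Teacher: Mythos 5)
Your stalling point is real, and it is not a defect in your approach; the paper's own derivation does not resolve it either. The paper proves this proposition by a different route from yours: it invokes the identity $OMR_{m,k} = 1 - EO_{m,k}$ and substitutes the lower bound on $EO_{m,k}$ from the equal-opportunity proposition. But that substitution goes the wrong way: from $EO_{m,k} \geq \phi_{m,k}/\bigl(\phi_{m,k} + \sum_{l \neq k} SP_{m,l}\bigr)$ with $\phi_{m,k} = (1 - DR_{m,k})\cdot SP_{m,k}$, the identity gives $OMR_{m,k} \leq 1 - \phi_{m,k}/\bigl(\phi_{m,k} + \sum_{l \neq k} SP_{m,l}\bigr) = \sum_{l \neq k} SP_{m,l}/\bigl(\phi_{m,k} + \sum_{l \neq k} SP_{m,l}\bigr)$, i.e.\ an \emph{upper} bound, whereas the paper writes $\geq$. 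Your direct expansion reaches the same conclusion: with $\alpha = \sum_{l \neq k}\mathbb{P}(z = k, y = l \mid x \in \mathcal{X}_m)$ and $b = \phi_{m,k}$ one has $OMR_{m,k} = \alpha/(\alpha + b)$, which is increasing in $\alpha$, so the substitution $\alpha \leq \sum_{l \neq k} SP_{m,l}$ can only produce an upper bound. A quick sanity check shows the stated direction is false in general: a critic who always agrees ($z = y$) has $OMR_{m,k} = 0$, while the right-hand side is strictly positive whenever $\sum_{l \neq k} SP_{m,l} > 0$.

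So the correct reading of the proposition is as an upper bound $U_{m,k}$ on $OMR_{m,k}$ computable from disagreement feedback, not a lower bound; and, exactly as you suspect, no nontrivial lower bound on the hidden sum $\alpha$ is available from $DR$ and $SP$ alone, so the natural companion choice is the trivial $L_{m,k} = 0$ (mirroring the paper's use of $U_{m,k} = 1$ for equal opportunity and predictive equality). That swap would then propagate into the subsequent OMR bounds and estimate obtained from the theorem. In short, your attempt correctly executes the machinery of the two preceding propositions, correctly identifies the monotonicity obstruction, and the discrepancy you flag is an error in the paper's statement and proof rather than a gap in your argument.
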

\begin{proof}
From the definition of equal opportunity, we have
\begin{equation}
EO_{m,k} \geq \displaystyle \frac{(1 - DR_{m,k}) \cdot SP_{m,k}}{(1 - DR_{m,k}) \cdot SP_{m,k} + \displaystyle \sum_{l \neq k} SP_{m,l}}
\end{equation}
We know that, $OMR_{m,k} = 1 - EO_{m,k}$. Therefore we have,
\begin{equation}
\begin{array}{lcl}
OMR_{m,k} & \geq & 1 - \displaystyle \frac{(1 - DR_{m,k}) \cdot SP_{m,k}}{(1 - DR_{m,k}) \cdot SP_{m,k} + \displaystyle \sum_{l \neq k} SP_{m,l}}
\\[6ex]
& = & \displaystyle \frac{\displaystyle \sum_{l \neq k} SP_{m,l}}{(1 - DR_{m,k}) \cdot SP_{m,k} + \displaystyle \sum_{l \neq k} SP_{m,l}}
\end{array}
\end{equation}
\hfill \qed
\end{proof}

Therefore, from Equation \eqref{Eqn: group fairness bounds}, the notion of overall misclassification is bounded by 
\begin{equation}
\displaystyle \max_k \left( \max_{m, m'} \displaystyle \frac{\Omega_{m,k}}{\phi_{m,k} + \Omega_{m,k}} - 1 \right) \ \leq \ OMR \ \leq \ \displaystyle \max_k \left( \max_{m, m'} 1 - \displaystyle \frac{\Omega_{m',k}}{\phi_{m',k} + \Omega_{m',k}} \right)
\end{equation}
and, from Equation \eqref{Eqn: group fairness estimate}, the estimate of overall misclassification from the computed bounds is as follows.
\begin{equation}
\displaystyle \hat{PE} = \frac{1}{2} \left[ \displaystyle \max_k \left( \max_{m, m'} \displaystyle \frac{\Omega_{m,k}}{\phi_{m,k} + \Omega_{m,k}} - 1 \right) + \displaystyle \max_k \left( \max_{m, m'} 1 - \displaystyle \frac{\Omega_{m',k}}{\phi_{m',k} + \Omega_{m',k}} \right) \right]
\end{equation}
where $\Omega_{m,k} =\displaystyle \sum_{l \neq k} SP_{m,l}$ and $\phi_{m,k} = (1 - DR_{m,k}) \cdot SP_{m,k}$.

\begin{figure}[!t]
\centering
\begin{subfigure}[b]{0.49\textwidth}
\centering
\includegraphics[width=\textwidth]{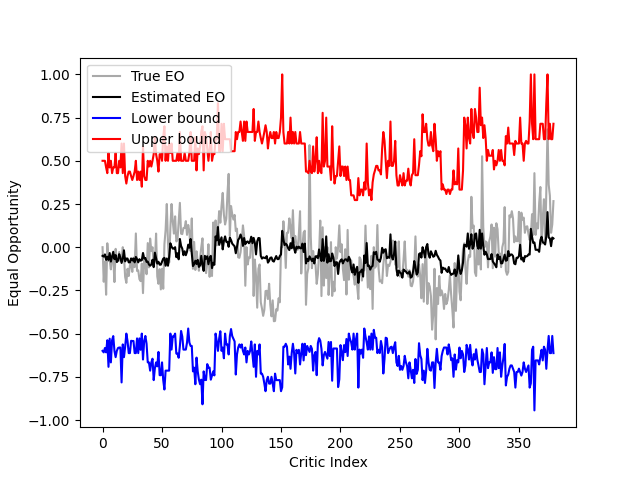}
\caption{Equal Opportunity}
\label{Fig: Estimated EO}
\end{subfigure}
\hfill
\begin{subfigure}[b]{0.49\textwidth}
\centering
\includegraphics[width=\textwidth]{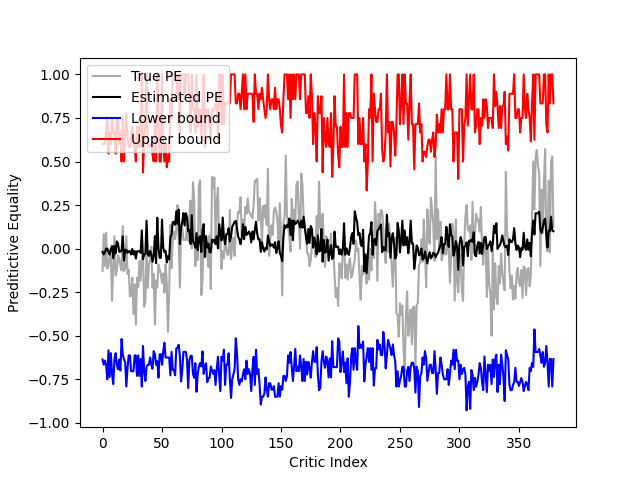}
\caption{Predictive Equality}
\label{Fig: Estimated PE}
\end{subfigure}
\hfill
\begin{subfigure}[b]{0.49\textwidth}
\centering
\includegraphics[width=\textwidth]{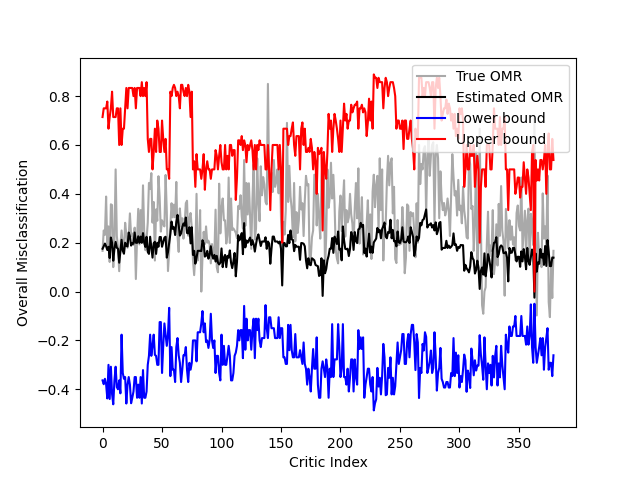}
\caption{Overall Misclassification Rate}
\label{Fig: Estimated OM}
\end{subfigure}
\hfill
\begin{subfigure}[b]{0.49\textwidth}
\centering
\includegraphics[width=\textwidth]{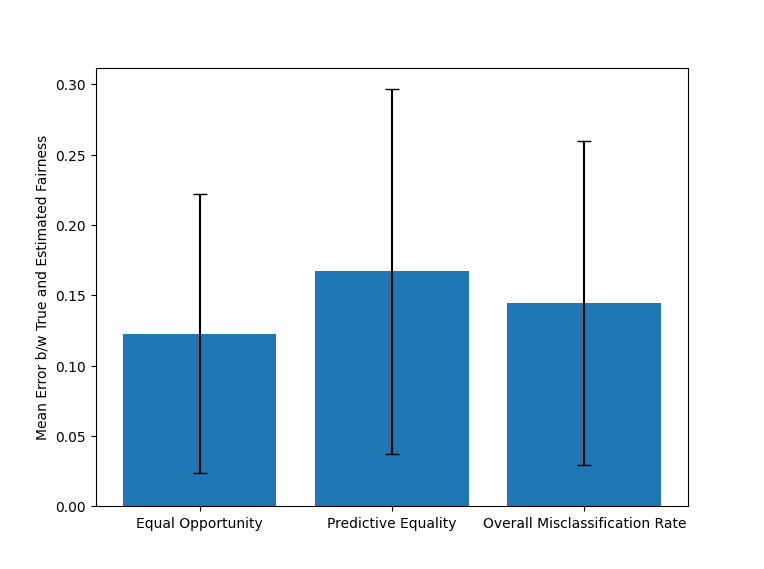}
\caption{Mean Error of all Estimates}
\label{Fig: Error in estimation}
\end{subfigure}
\caption{Comparison of True Group Fairness (Grey), Proposed Estimate (Black), Upper (Red) and Lower bounds (Blue), and their respective mean error averaged over all 400 critics}
\end{figure}


\section{Validation Methodology and Results}

We validate our theoretical findings using the real human feedback collected by Dressel and Farid \cite{dressel2018accuracy}. This data acquisition experiment consists of a short description of the defendant (gender, age, race, and previous criminal history) is provided to the human critics. A total of 1000 defendant descriptions are used that are drawn randomly from the original ProPublica's COMPAS dataset. Furthermore, these descriptions were divided into 20 subsets of 50 each. The experiment consisted of 400 different critics recruited from Amazon Mechanical Turk and each one of them was randomly assigned to see one of these 20 subsets. The participants were then asked to respond either \emph{yes} or \emph{no} to the question “Do you think this person will commit another crime within 2 years?”. From these responses, we process the dataset to obtain disagreement feedback of each critic. Since the responses are binary, the disagreement feedback can be extracted as follows $s = y \oplus z$ where, $y$ denote COMPAS outcome label, $z$ denote the critic's response, and $\oplus$ represents XOR operation. In other words, if the critic predicts the label correctly, then he/she \emph{agrees} with the outcome label. On the other hand, if their prediction is incorrect, the critic \emph{disagrees} with the outcome. We also make an assumption that the outcome labels generated by COMPAS are presented to the critics.

Given the disagreement feedback from 400 critics, we evaluated COMPAS for different group fairness notions across both race and gender. Figure \ref{Fig: Estimated EO} demonstrates that the estimated equal opportunity (black line) from lower and upper bounds is close 0 as opposed to the true equal opportunity (blue line) for most of the critics. Moreover, a few violations in both upper and lower bounds can be observed. These violations may arise because of deviations from Bayes' rule in human behavior, which have been documented in the psychology literature \cite{Ouwersloot1998}. Similar to equal opportunity, the estimated predictive equality remains close to zero for majority of the critics. Figure \ref{Fig: Estimated PE} depicts that the estimated predictive equality follows the trends of the true predictive equality for majority of the critics. Unfortunately, number of violations increased in case of overall misclassification rates where, the predicted overall misclassification lies around 0.2.

Figure \ref{Fig: Error in estimation} demonstrates the mean absolute error of 400 critics across three different indefinite notions. In case of estimated equal opportunity, the error varies from about 5\% to 22\% with the mean error of 12\%. On the other hand, the mean error in estimating predictive equality is about 17\% with maximum error 30\%. Similarly, the mean absolute error in estimating overall misclassification is around 15\%. The minimum error rate (around 2.5\%) is similar for all three group fairness notions.

\section{Conclusions and Future Work}
In this paper, we proposed a novel feedback elicitation model for non-experts based disagreements. We identify two sets of groups fairness notions, one which can be precisely quantified from disagreements rates, and other which can be estimated based computed lower and upper bounds. Moreover, we validated our theoretical findings using real human feedback data across different group fairness notions and sensitive groups. In future, the objective to apply the proposed feedback elicitation model to kidney placement application by collecting actual disagreement feedback from patients and donors. Additionally, we hope to explore the relation between individual fairness and disagreements as well.

\bibliographystyle{splncs04}
\bibliography{sample-bibliography}

\newpage

\appendix

\section{Ethical Implications of This Work}

This paper thrives towards improving diversity, equity, and inclusion (DEI) within fairness evaluation of machine learning algorithms. If successful, the paper enables us to collect feedback from diverse stakeholders in a given system, and use it in its maintenance and improvement. In addition, the simulation experiments conducted in this paper are designed using a publicly available dataset, which  is borrowed from the work carried out by Dressel and Farid in \cite{dressel2018accuracy}. The data released by Dressel and Farid is anonymized, and does not contain any human subject information regarding the critics. The data was itself curated from anonymous critics that were hired on Amazon's Mechanical Turk, which does not reveal any information regarding the crowd workers. Furthermore, since the original work by Dressel and Farid was approved by their Institutional Review Board at the Dartmouth College, as detailed in \cite{dressel2018accuracy}. 

\section{Reproducibility Statement}
All the code related to the simulation experiments stated in this paper will be available to the public along with the camera-ready version, once the paper is accepted for publication.

\end{document}